\newtheorem{proposition}{Proposition}
\begin{document}

\title[Energy simulation in double sine-Gordon]{On the simulation of the energy transmission in the forbidden band-gap of a spatially discrete double sine-Gordon system}

\author{J. E. Mac\'{\i}as-D\'{\i}az}
\address{Departamento de Matem\'{a}ticas y F\'{\i}sica, Universidad Aut\'{o}noma de Aguascalientes, Avenida Universidad 940, Ciudad Universitaria, Aguascalientes 20131, Mexico}
\email{jemacias@correo.uaa.mx}

\subjclass[2010]{(PACS) 02.60.Lj; 02.70.Bf; 45.10.-b}
\keywords{double sine-Gordon chain; difference-differential equation; finite-difference scheme; energy scheme; nonlinear supratransmission}

\begin{abstract}
In this work, we present a numerical method to consistently approximate solutions of a spatially discrete, double sine-Gordon chain which considers the presence of external damping. In addition to the finite-difference scheme employed to approximate the solution of the difference-differential equations of the model under investigation, our method provides positivity-preserving schemes to approximate the local and the total energy of the system, in such way that the discrete rate of change of the total energy with respect to time provides a consistent approximation of the corresponding continuous rate of change. Simulations are performed, first of all, to assess the validity of the computational technique against known qualitative solutions of coupled sine-Gordon and coupled double sine-Gordon chains. Secondly, the method is used in the investigation of the phenomenon of nonlinear transmission of energy in double sine-Gordon systems; the qualitative effects of the damping coefficient on the occurrence of the nonlinear process of supratransmission are briefly determined in this work, too.
\end{abstract}

\maketitle

\section{Introduction\label {S:Intro}}

The well-known sine-Gordon equation is a partial differential equation that appears in many applications, either in its original form or as a slight modification of the classical version. For instance, a damped sine-Gordon equation appears in the study of long Josephson junctions between superconductors when dissipative effects are taken into account \cite {Solitons}. A similar partial differential equation with different nonlinear term appears in the study of fluxons in Josephson transmission lines \cite {Lomdahl}. Meanwhile, a modified Klein-Gordon equation appears in the statistical mechanics of nonlinear coherent structures ---such as solitary waves---, in the form of a Langevin equation (see \cite {Makhankov}, pp. 298--309).

The spatially discrete version of the sine-Gordon equation also has many important applications. For instance, a coupled system of discrete sine-Gordon equations may describe a chain of harmonic oscillators coupled through Hookean springs \cite {Geniet-Leon} or a system of Josephson junctions attached through superconducting wires \cite {Chevrieux2}. In the former case, a system initially at rest, with void initial velocities and sinusoidal Dirichlet boundary condition at one end, is employed to study the phenomenon of supratransmission of energy \cite {Geniet-Leon, Geniet-Leon2}, which is a nonlinear process characterized by a sudden increase in the amplitude of wave signals generated by the perturbed boundary, for driving amplitudes above a critical value called the {\em supratransmission threshold}. This phenomenon is also present in the investigation of the transmission of energy in chains of Josephson junctions, except that, in this case, a harmonic Neumann boundary condition needs to be imposed upon the physical problem for the sake of meaningfulness \cite {Chevrieux2}.

We must remark, in this point, that several other bounded, nonlinear regimes present the phenomenon of supratransmission when they are harmonically perturbed at one end, such is the case of Klein-Gordon arrays \cite {Geniet-Leon}, Fermi-Pasta-Ulam systems \cite {Khomeriki}, Bragg media in the nonlinear Kerr regime \cite {Leon-Spire}, and even in spatially continuous, bounded media described by undamped sine-Gordon equations \cite {Khomeriki-Leon}. Moreover, the presence of the phenomenon of nonlinear supratransmission is also found in discrete, double sine-Gordon chains \cite {Geniet-Leon2}. However, the specialized mathematical literature unfortunately lacks studies to approximate the occurrence of the process in these systems.

Nevertheless, in this work, we study the process of nonlinear supratransmission in dissipative, double sine-Gordon chains, employing a dissipation-preserving finite-difference scheme. Of course, there exist many analytical \cite {ravi2009differential, wang2009discrete} and numerical \cite {dehghan2009numerical, aydin2007symplectic, cassol2008numerical} techniques to approximate solutions of Klein-Gordon-like equations. The computational method employed in this work distinguishes from many other techniques available in the literature in that it consistently approximates not only the solution of the physical model under study, but also the local energy density, the total energy, and the rate of change of the energy with respect to time. Moreover, the nonnegative character of both the energy density and the total energy ---a characteristic which is not preserved by other numerical techniques \cite {Macias-Puri}---, is preserved by our method. As we shall see later on, these qualities are, by definition, highly desirable characteristics of any computational technique employed in the study of the phenomenon of supratransmission.

In Section \ref {S:Preliminaries}, we present the system of ordinary differential equations that motivates our study, together with the local energy functions associated, and the total energy of the system. A proposition which summarizes the expression of the derivative of the total energy with respect to time is provided in this stage, as well as a brief description of the process of nonlinear supratransmission, particularly, in double sine-Gordon systems. Section \ref {S:NumMeth} introduces the numerical method employed to approximate the solutions of the model under investigation, the local energy distribution, and the total energy of the system. A subsection on the numerical properties of the method summarizes the properties of consistency established for convenience in the appendices, and another subsection presents some remarks on the computational implementation of our technique. Section \ref {S:Simul} presents simulations of sine-Gordon and double sine-Gordon systems, obtained by means of our method. The former regime is employed only for validation purposes, while the simulations on the latter (which follow the same methodology proposed in \cite {Geniet-Leon}) are aimed at establishing the existence of the process of nonlinear supratransmission in this system. Finally, this work closes with a section of concluding remarks and further directions of research.

\section{Preliminaries\label {S:Preliminaries}}

\subsection{Physical model\label {SS:PhysModel}}

Throughout this work, we let $\mathbb {Z} _N = \{ 1 , 2 , \dots , N - 1 \}$, for every  positive integer $N$; obviously, we will assume that $N > 1$ for the sake of non-triviality. Moreover, we let $\overline {\mathbb {Z}} _N = \mathbb {Z} _N \cup \{ 0 , N \}$, that is, $\overline {\mathbb {Z}} _N = \{ 0 , 1 , 2 , \dots , N \}$.

Let $N$ be a positive integer and, for every $n = 0 , 1 , \dots , N$, let $u _n$ be a real function of time $t \geq 0$. Moreover, let $c$ be a positive real number, and let $\gamma$ be a nonnegative number. Throughout, we consider a mechanical chain of nonlinear oscillators obeying the system of ordinary differential equations with initial-boundary conditions
\begin{equation}
	\begin{array}{c}
		\displaystyle {\frac {d ^2 u _n} {d t ^2} - \delta _x ^{(2)} u _n + \gamma \frac {d u _n} {d t} + V ^\prime (u _n) = 0}, \quad \forall n \in \mathbb {Z} _N, \\
		\left\{ \begin{array}{ll}
			u _n (0) = 0, & \forall n \in \mathbb {Z} _N, \\
			\displaystyle {\frac {d u _n} {d t} (0) = 0}, & \forall n \in \mathbb {Z} _N, \\
			u _0 (t) = \phi (t), & \forall t \geq 0, \\
			u _N (t) - u _{N - 1} (t) = 0, & \forall t \geq 0.
		\end{array} \right.
	\end{array}
	\label{Eq:Model}
\end{equation}
In other words, we consider a spatially discrete, bounded system initially at rest, with zero initial velocities, perturbed at the left end by a function $\phi$ which we will assume to be continuous, and with discrete Neumann boundary condition on the right end. The constant $\gamma$ is immediately identified as the external damping coefficient, while $c$ is clearly the coupling coefficient between nodes. Here, the spatial, second-difference operator 
\begin{equation}
	\delta _x ^{(2)} u _n = c ^2 (u _{n + 1} - 2 u _n + u _{n - 1}), \quad \forall n \in \mathbb {Z} _N,
\end{equation}
has been employed for convenience.

For the sake of concreteness, we will consider a driving function of the form
\begin{equation}
	\phi (t) = A \sin (\Omega t),
	\label{Eq:Driving}
\end{equation}
where the driving amplitude $A$ and the driving frequency $\Omega$ are positive real numbers. Moreover, we consider a potential function of the form
\begin{equation}
	V (u) = \frac {1} {2} - \frac {1} {6} \left[ 2 \cos u + \cos (2 u) \right],
	\label{Eq:Potential}
\end{equation}
whence the double sine-Gordon law $V ^\prime (u) = \frac {1} {3} \left[ \sin u + \sin (2 u) \right]$ readily results. Indeed, let $c = \frac {1} {\Delta x}$. If $\Delta x$ is relatively small (or, equivalently, $c$ is relatively large), then the system of ordinary differential equations of (\ref {Eq:Model}) approximates the spatially continuous, partial differential equation
\begin{equation}
	\frac {\partial ^2 v} {\partial t ^2} - \frac {\partial ^2 v} {\partial x ^2} + \gamma \frac {\partial v} {\partial t} + V ^\prime (v) = 0, \quad x \in [0 , L],
	\label{Eq:DSGPDE}
\end{equation}
where $L = N \Delta x$, and $v$ is a function of space $x$ and time $t$. This equation is clearly identified with the classical, double sine-Gordon equation with constant external damping. 

Of course, different potentials may give rise to other important models in mathematical physics. For instance, $V (u) = 1 - \cos (u)$ is the potential for the sine-Gordon regime, while $V (u) = \frac {1} {2!} u ^2 - \frac {1} {4!} u ^4 + \frac {1} {6!} u ^6$ corresponds to the potential of a nonlinear Klein-Gordon equation. In fact, it is important to warn the reader that the dissipation-preserving numerical technique presented in this work is valid not only for the double sine-Gordon potential, but also for any differentiable function $V$ defined on the real line.

\subsection{Energy of the system\label {SS:Energy}}

Let $n \in \mathbb {Z} _N$. For physical reasons, it is important to notice that the local energy of the $n$th node in the undamped system governed by (\ref {Eq:Model}) is provided by the expression
\begin{equation}
	H _n = \frac {1} {2} \left[ \left( \frac {d u _n} {d t} \right) ^2 + \left( \delta _x u _n \right) ^2 \right] + V (u _n), \quad \forall n \in \mathbb {Z} _N,
	\label {Eq:LocalEnergy}
\end{equation}
where the spatial, first-order difference operator $\delta _x$ is defined through 
\begin{equation}
	\delta _x u _n = c \left( u _{n + 1} - u _n \right), \quad \forall n \in \mathbb {Z} _N.
\end{equation}

In these terms, the total energy $E$ of the system (\ref {Eq:Model}) is obtained by adding the local energies $H _n$, for $n \in \mathbb {Z} _N$, and the potential due to the coupling in the boundaries of the chain. In other words,
\begin{equation}
	E = \sum _{n \in \mathbb {Z} _N} H _n + \frac {1} {2} \left( \delta _x u _0 \right) ^2.
	\label {Eq:TotalEnergy}
\end{equation}

The following proposition is easy to establish.

\begin{proposition}
	The rate of change of energy with respect to time of a system governed by (\ref {Eq:Model}) is given by
	\begin{equation}
		\frac {d E} {d t} = - c \left( \delta _x u _0 \right) \frac {d u _0} {d t} - \gamma \sum _{n \in \mathbb {Z} _N} \left( \frac {d u _n} {d t} \right) ^2.
		\label{Eq:DerivativeEnergy}
	\end{equation} \qed
\end{proposition}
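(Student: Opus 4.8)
The plan is to differentiate the total energy $E$ in (\ref{Eq:TotalEnergy}) directly with respect to $t$, to use the equations of motion (\ref{Eq:Model}) to eliminate the second time-derivatives, and finally to rearrange the remaining elastic contributions by a discrete summation-by-parts identity. Writing $\dot u _n = du _n / dt$ for brevity, the product rule applied to each local energy $H _n$ in (\ref{Eq:LocalEnergy}) and to the boundary term $\tfrac{1}{2}(\delta _x u _0) ^2 = \tfrac{1}{2} c ^2 (u _1 - u _0) ^2$ gives
\[
	\frac{dE}{dt} = \sum _{n \in \mathbb{Z} _N} \left[ \dot u _n \frac{d ^2 u _n}{dt ^2} + V ^\prime (u _n)\, \dot u _n + c ^2 (u _{n+1} - u _n)(\dot u _{n+1} - \dot u _n) \right] + c ^2 (u _1 - u _0)(\dot u _1 - \dot u _0).
\]
Next I would substitute $d ^2 u _n / dt ^2 = \delta _x ^{(2)} u _n - \gamma \dot u _n - V ^\prime (u _n)$ from the first line of (\ref{Eq:Model}); the potential terms then cancel against $V ^\prime (u _n)\dot u _n$, and the damping contribution is isolated as $-\gamma \sum _{n \in \mathbb{Z} _N} \dot u _n ^{\,2}$, which is already the second summand of (\ref{Eq:DerivativeEnergy}).

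It then remains to prove that the elastic part,
\[
	\sum _{n \in \mathbb{Z} _N} \left( \delta _x ^{(2)} u _n \right) \dot u _n + c ^2 \sum _{n \in \mathbb{Z} _N} (u _{n+1} - u _n)(\dot u _{n+1} - \dot u _n) + c ^2 (u _1 - u _0)(\dot u _1 - \dot u _0),
\]
collapses to $- c (\delta _x u _0)\, \dot u _0$. To see this I would set $a _n = u _{n+1} - u _n$, note that $\delta _x ^{(2)} u _n = c ^2 (a _n - a _{n-1})$, and expand the two sums above after the index shift $n \mapsto n - 1$ in the pieces containing $a _{n-1}\dot u _n$. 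All interior cross terms telescope, and only the endpoint contributions $c ^2 a _{N-1}\dot u _N - c ^2 a _0 \dot u _0$ survive. The first of these vanishes because the discrete Neumann condition $u _N - u _{N-1} = 0$ of (\ref{Eq:Model}) gives $a _{N-1} = 0$, while the second equals $- c ^2 (u _1 - u _0)\dot u _0 = - c (\delta _x u _0)\dot u _0$, which is exactly the first summand of (\ref{Eq:DerivativeEnergy}).

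The only point demanding care is the bookkeeping in this discrete summation by parts: one must track the index ranges produced by the shift and verify that the boundary coupling term $\tfrac{1}{2}(\delta _x u _0) ^2$ deliberately included in the definition (\ref{Eq:TotalEnergy}) of $E$ is precisely what makes the interior terms cancel, so that a single boundary flux term remains. Everything else reduces to the product rule and a direct substitution of the equations of motion, so no further obstacles are anticipated.
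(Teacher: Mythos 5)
Your proof is correct and is essentially the argument the paper intends: the paper states this proposition without proof ("easy to establish"), but its Appendix B proof of the discrete analogue (Proposition \ref{Prop:DerTotEner}) follows exactly the same structure you use---differentiate (difference) the energy, substitute the equation of motion to cancel the potential terms and isolate the damping sum, and telescope the elastic terms so that the Neumann condition kills the right endpoint and the boundary coupling term $\tfrac{1}{2}(\delta_x u_0)^2$ reduces the left endpoint to the single flux term $-c(\delta_x u_0)\,du_0/dt$. Your bookkeeping of the telescoping sum and the endpoint contributions checks out.
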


As a corollary, the system (\ref {Eq:Model}) conserves the total energy if no external damping is present and, either $\phi$ is a constant function or a void Neumann condition is imposed on the left end of the chain.

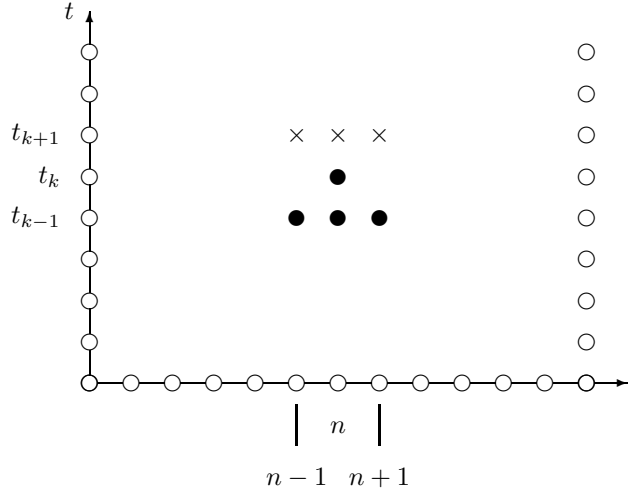
\begin{figure}
\begin{center}
\setlength{\unitlength}{.55mm}
\begin{picture}(140,100)
\put(10,12){\line(0,1){6}} %
\put(10,22){\line(0,1){6}} %
\put(10,32){\line(0,1){6}} %
\put(10,42){\line(0,1){6}} %
\put(10,52){\line(0,1){6}} %
\put(10,62){\line(0,1){6}} %
\put(10,72){\line(0,1){6}} %
\put(10,82){\line(0,1){6}} %
\put(10,92){\vector(0,1){8}} %
\put(5,100){\makebox(0,0)[5]{$t$}} %
\put(12,10){\line(1,0){6}} %
\put(22,10){\line(1,0){6}} %
\put(32,10){\line(1,0){6}} %
\put(42,10){\line(1,0){6}} %
\put(52,10){\line(1,0){6}} %
\put(62,10){\line(1,0){6}} %
\put(72,10){\line(1,0){6}} %
\put(82,10){\line(1,0){6}} %
\put(92,10){\line(1,0){6}} %
\put(102,10){\line(1,0){6}} %
\put(112,10){\line(1,0){6}} %
\put(122,10){\line(1,0){6}} %
\put(132,10){\vector(1,0){8}} %
\put(10,10){\circle{4}} %
\put(10,20){\circle{4}} %
\put(10,30){\circle{4}} %
\put(10,40){\circle{4}} %
\put(10,50){\circle{4}} %
\put(10,60){\circle{4}} %
\put(10,70){\circle{4}} %
\put(10,80){\circle{4}} %
\put(10,90){\circle{4}} %
\put(130,10){\circle{4}} %
\put(130,20){\circle{4}} %
\put(130,30){\circle{4}} %
\put(130,40){\circle{4}} %
\put(130,50){\circle{4}} %
\put(130,60){\circle{4}} %
\put(130,70){\circle{4}} %
\put(130,80){\circle{4}} %
\put(130,90){\circle{4}} %
\put(10,10){\circle{4}} %
\put(20,10){\circle{4}} %
\put(30,10){\circle{4}} %
\put(40,10){\circle{4}} %
\put(50,10){\circle{4}} %
\put(60,10){\circle{4}} %
\put(70,10){\circle{4}} %
\put(80,10){\circle{4}} %
\put(90,10){\circle{4}} %
\put(100,10){\circle{4}} %
\put(110,10){\circle{4}} %
\put(120,10){\circle{4}} %
\put(130,10){\circle{4}} %
\put(70,-2){\makebox(0,0)[b]{$n$}} %
\put(60,-15){\makebox(0,0)[b]{$n - 1$}} %
\put(60,-5){\line(0,1){10}} %
\put(80,-5){\line(0,1){10}} %
\put(80,-15){\makebox(0,0)[b]{$n + 1$}} %
\put(3,60){\makebox(0,0)[r]{$t _k$}} %
\put(3,70){\makebox(0,0)[r]{$t _{k + 1}$}} %
\put(3,50){\makebox(0,0)[r]{$t _{k - 1}$}} %
\put(70,60){\circle*{4}} %
\put(70,50){\circle*{4}} %
\put(60,50){\circle*{4}} %
\put(80,50){\circle*{4}} %
\put(60,70){\makebox(0,0){$\times$}} %
\put(70,70){\makebox(0,0){$\times$}} %
\put(80,70){\makebox(0,0){$\times$}} %
\end{picture} 
\end{center}\ \smallskip
\caption{Forward-difference stencil for the approximation to the partial differential equation (\ref {Eq:Model}) at time $t _k$, using the finite-difference scheme (\ref {Eq:FDS}). The black circles represent known approximations to the actual solutions at times $t _{k - 1}$ and $t _k$, and the crosses denote the unknown approximations at time $t _{k + 1}$. \label{Fig:1}}
\end{figure}

\subsection{Nonlinear supratransmission\label {SS:Supratransmission}}

As observed in the literature (see \cite {Geniet-Leon, Geniet-Leon2, Khomeriki}), the double sine-Gordon system (\ref {Eq:Model}), as well as the nonlinear Klein-Gordon and the sine-Gordon chains, and the classical $\beta$-Fermi-Pasta-Ulam systems, presents the phenomenon of supratransmission of energy, which is a nonlinear process characterized by a sudden increase in the amplitude of wave signals propagated into a nonlinear medium by a driving source which irradiates at a frequency in the forbidden band-gap. The mechanism of this transmission of energy is through the generation of localized, nonlinear modes at the driving boundary, in the form of moving breathers or soliton solutions \cite {fabian2009perturbation}.

More concretely, consider a nonlinear system of any of the types mentioned in the previous paragraph, which is perturbed at one end by a harmonic function of the form (\ref {Eq:Driving}), with $\Omega$ a fixed value in the forbidden band-gap of the system. Relatively small driving amplitudes $A$ result in the propagation of practically no energy into the system; however, as the value of $A$ is increased, the existence of a critical value $A _s$, above which the system begins to absorb great amounts of energy from the boundary, is immediately noticed. 

The value $A _s$ introduced in the previous paragraph, is called the {\em supratransmission threshold}, and its existence has been analytically proved for discrete \cite {Geniet-Leon} and continuous \cite{Khomeriki-Leon} sine-Gordon chains, as well as for systems of anharmonic oscillators \cite {Khomeriki}. However, as it was mentioned before, the study of the double sine-Gordon regime has been left practically unexplored.

For our particular study, a simple analysis of the undamped, linearized system of differential equations in (\ref {Eq:Model}) shows that the linear dispersion relation is given by 
\begin{equation}
	\omega ^2 (k) = 1 + 2 c ^2 (1 - \cos k).
	\label{Eq:Dispersion}
\end{equation}
In our simulations, the driving frequency $\Omega$ will take on values in the forbidden band-gap region $\Omega < 1$.

\begin{figure}[tbc]
	\centerline{%
	\begin{tabular}{cc}
		\includegraphics[width=0.48\textwidth]{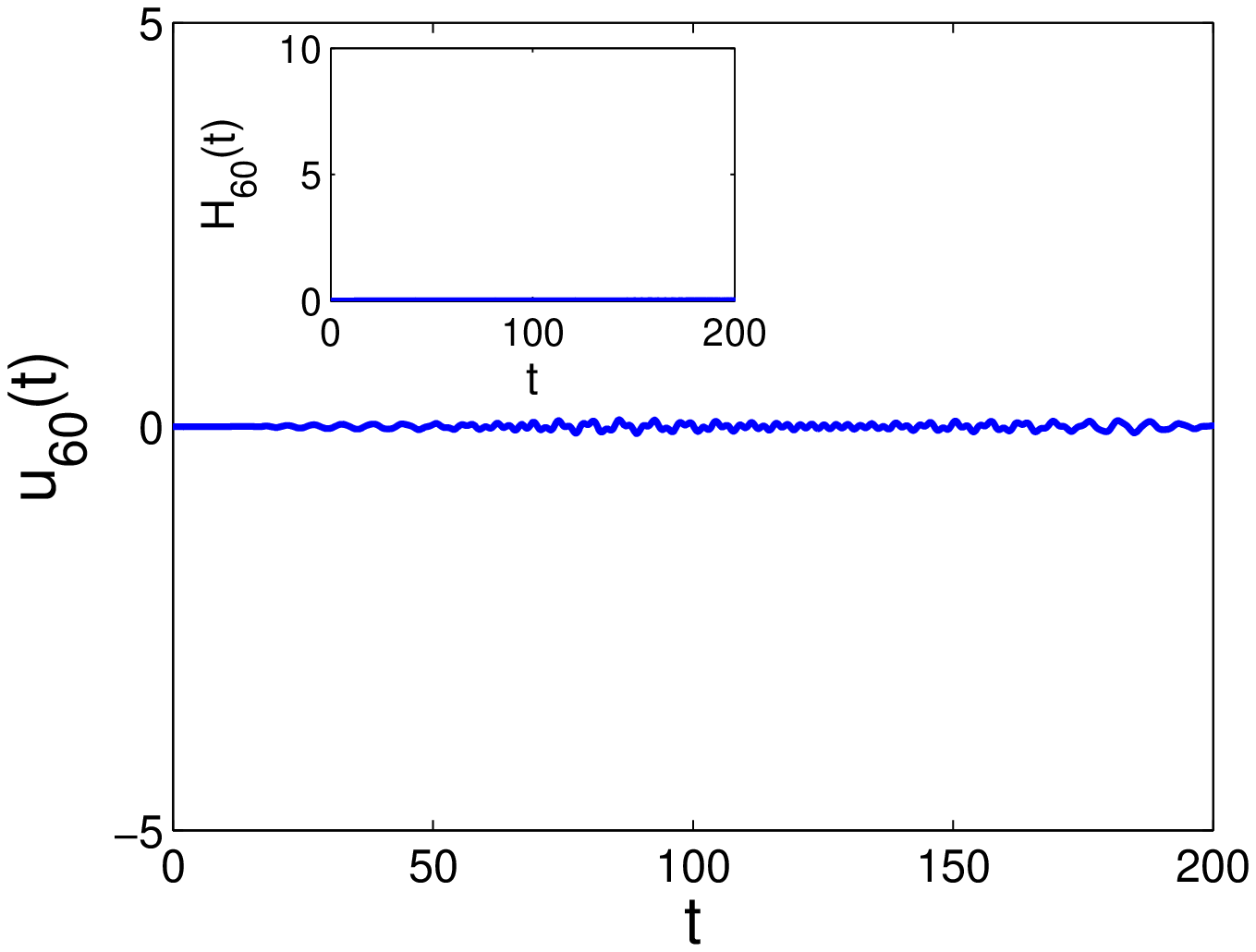} & \includegraphics[width=0.48\textwidth]{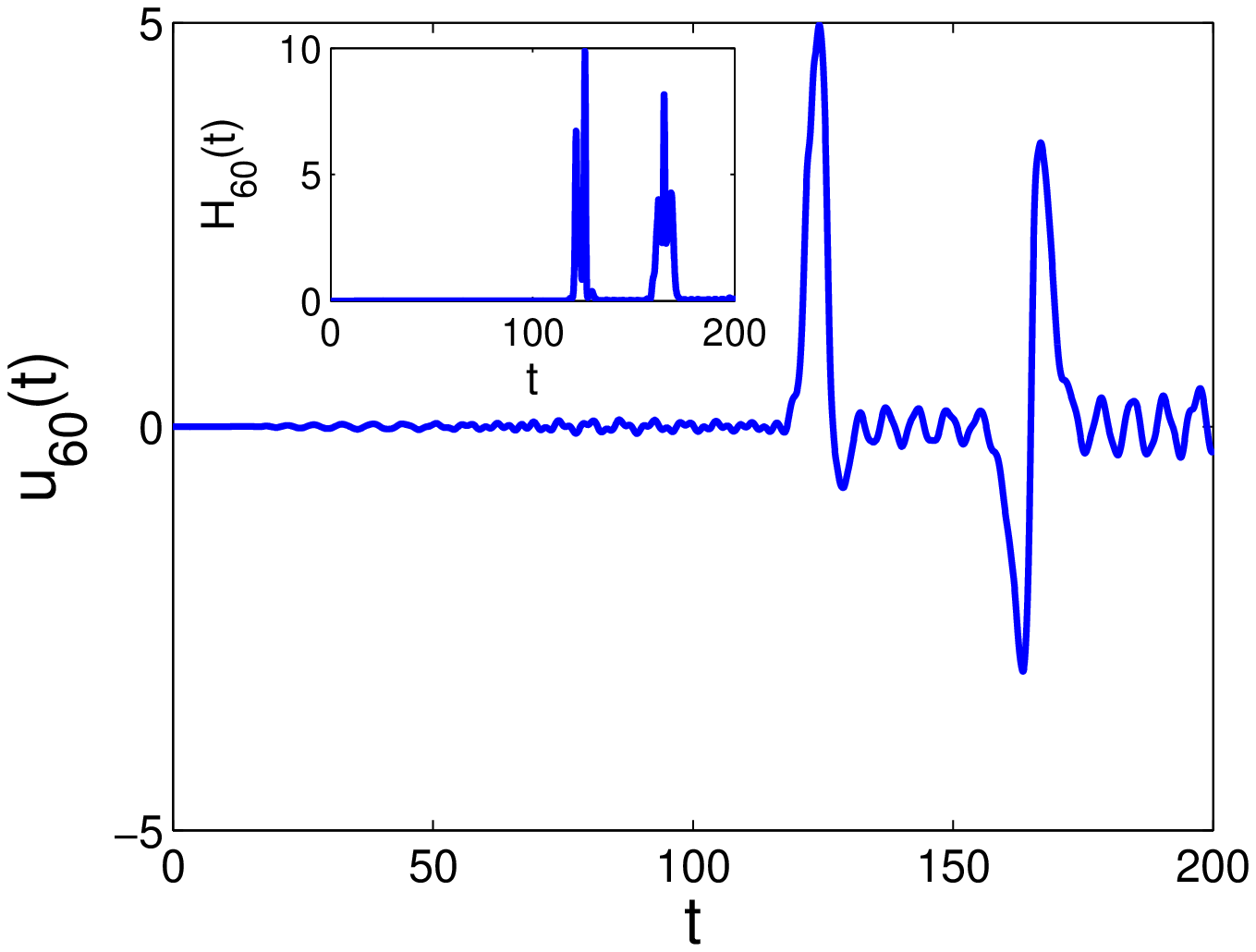}
	\end{tabular}}
	\caption{Graphs of approximate solution $u _{60}$ versus time, of the $60$th node of a system (\ref {Eq:Model}) of length $N = 200$, for $c = 4$, $\gamma = 0$ and a potential $V (u) = 1 - \cos u$. The system was perturbed by means of (\ref {Eq:Driving}) with $\Omega = 0.9$, and two different amplitudes were used: $A = 1.77$ (top graph) and $A = 1.78$ (right column). The insets depict the corresponding temporal evolution of the local energy of the $60$th node. \label {Fig:2}}
\end{figure}

\begin{figure}[tbc]
	\centerline{%
	\begin{tabular}{cc}
		\includegraphics[width=0.48\textwidth]{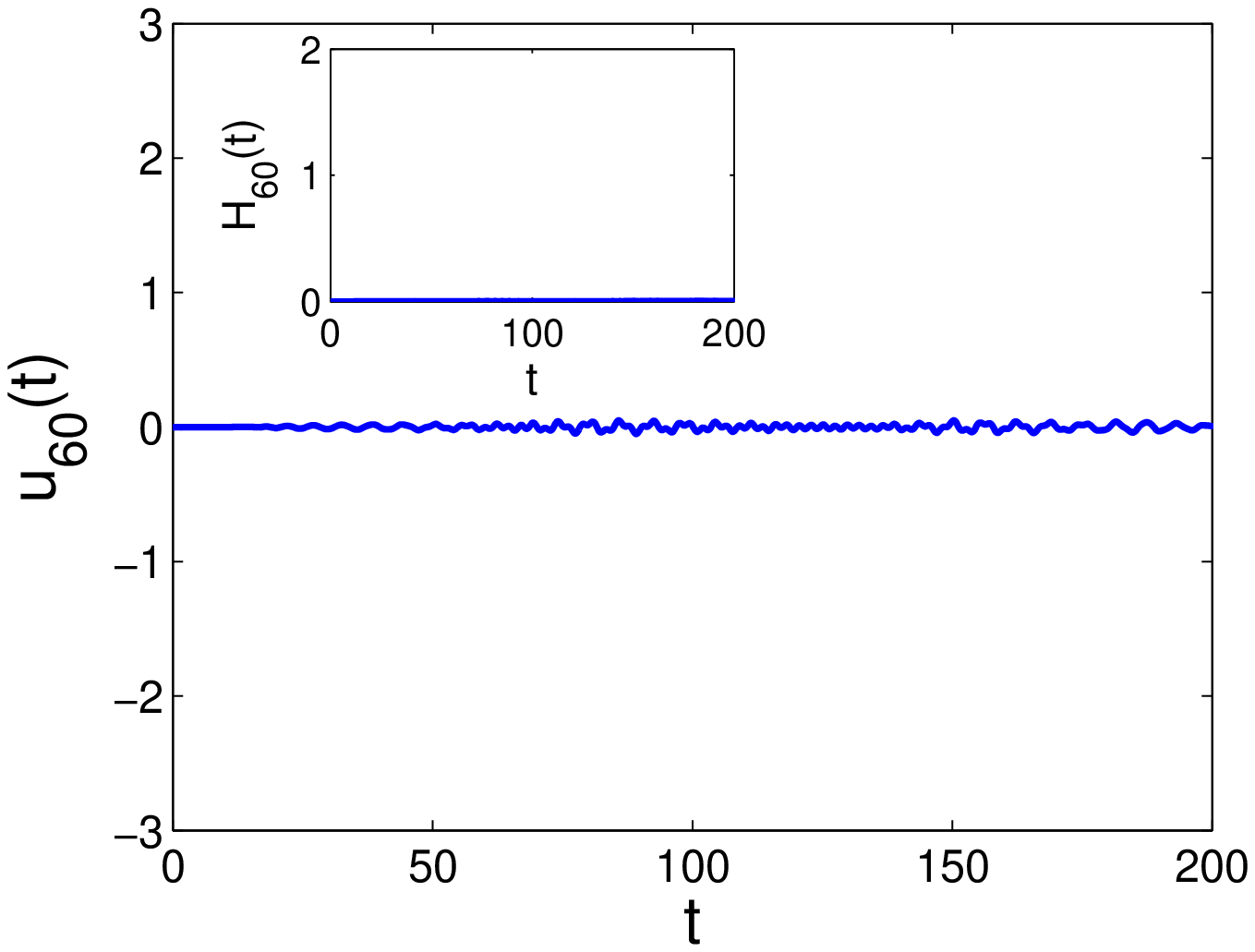} &
		\includegraphics[width=0.48\textwidth]{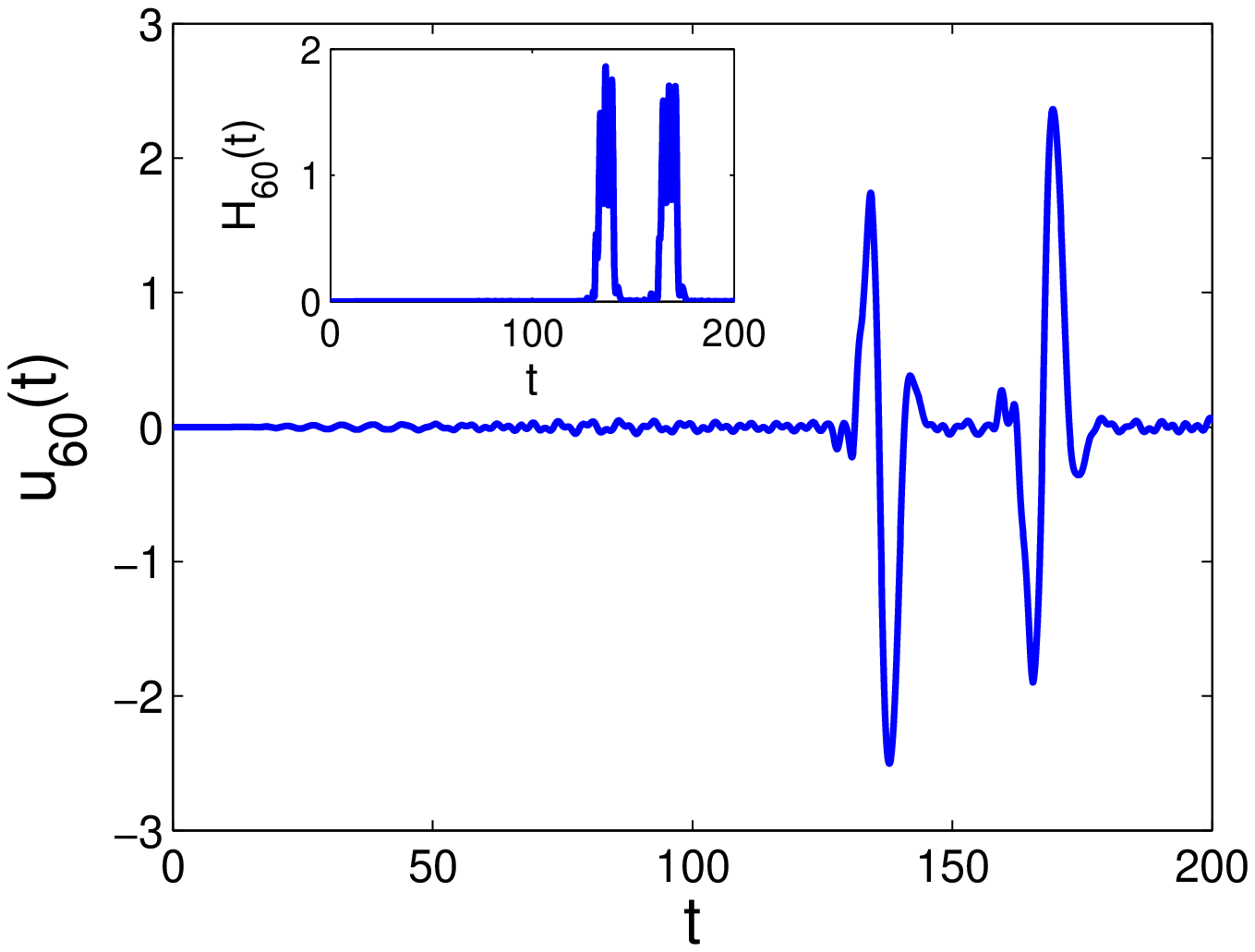}
	\end{tabular}}
	\caption{Graphs of approximate solution $u _{60}$ versus time, of the $60$th node of a system (\ref {Eq:Model}) of length $N = 200$, for $c = 4$, $\gamma = 0$ and a potential given by (\ref {Eq:Potential}). The system was perturbed by means of (\ref {Eq:Driving}) with $\Omega = 0.9$, and two different amplitudes were used: $A = 1.03$ (top graph) and $A = 1.04$ (right column). The insets depict the corresponding temporal evolution of the local energy of the $60$th node. \label {Fig:3}}
\end{figure}

\section{Numerical method\label {S:NumMeth}}

\subsection{Finite-difference scheme\label {SS:FDS}}

Let $N$ be a positive integer, and let $T$ be a positive real number. In order to approximate the solutions of the system (\ref {Eq:Model}) at time $T$, we fix a regular partition of the interval $[0 , T]$ of the form $0 = t _0 < t _1 < \ldots < t _M = T$, of norm $\Delta t = T / M$. Additionally, we let $u _n ^k$ be the numerical approximation of the actual value of $u _n$ at time $t _k$, for $k = 0 , 1 , \dots , M$. Moreover, in order to simplify our notation, we define the temporal differences
\begin{eqnarray}
	\delta _t u _n ^k & = & \frac {u _n ^{k + 1} - u _n ^k} {\Delta t}, \\
	\delta _t ^{(1)} u _n ^k & = & \frac {u _n ^{k + 1} - u _n ^{k - 1}} {2 \Delta t}, \\
	\delta _t ^{(2)} u _n ^k & = & \frac {u _n ^{k + 1} - 2 u _n ^k + u _n ^{k - 1}} {\left( \Delta t \right) ^2},
\end{eqnarray}
for every $n \in \overline {\mathbb {Z}} _N$ and $k \in \mathbb {Z} _M$. Furthermore, for such values of $n$ and $k$, we employ the temporal average operator
\begin{equation}
	\mu _t ^{(1)} u _n ^k = \frac {1} {2} \left( u _n ^{k + 1} + u _n ^{k - 1} \right), 
\end{equation}
and the discrete derivative of $V$ with respect to $u$ and the time average of $V$ at $u$, respectively:
\begin{eqnarray}
	\delta _u ^{(1)} V (u _n ^k) & = & \frac {V (u _n ^{k + 1}) - V (u _n ^{k - 1})} {u _n ^{k + 1} - u _n ^{k - 1}}, \\
	\mu _t V (u _n ^k) & = & \frac {V (u _n ^{k + 1}) + V (u _n ^k)} {2}.
\end{eqnarray}

\begin{figure}[tbc]
	\centerline{%
	\begin{tabular}{cc}
		\includegraphics[width=0.48\textwidth]{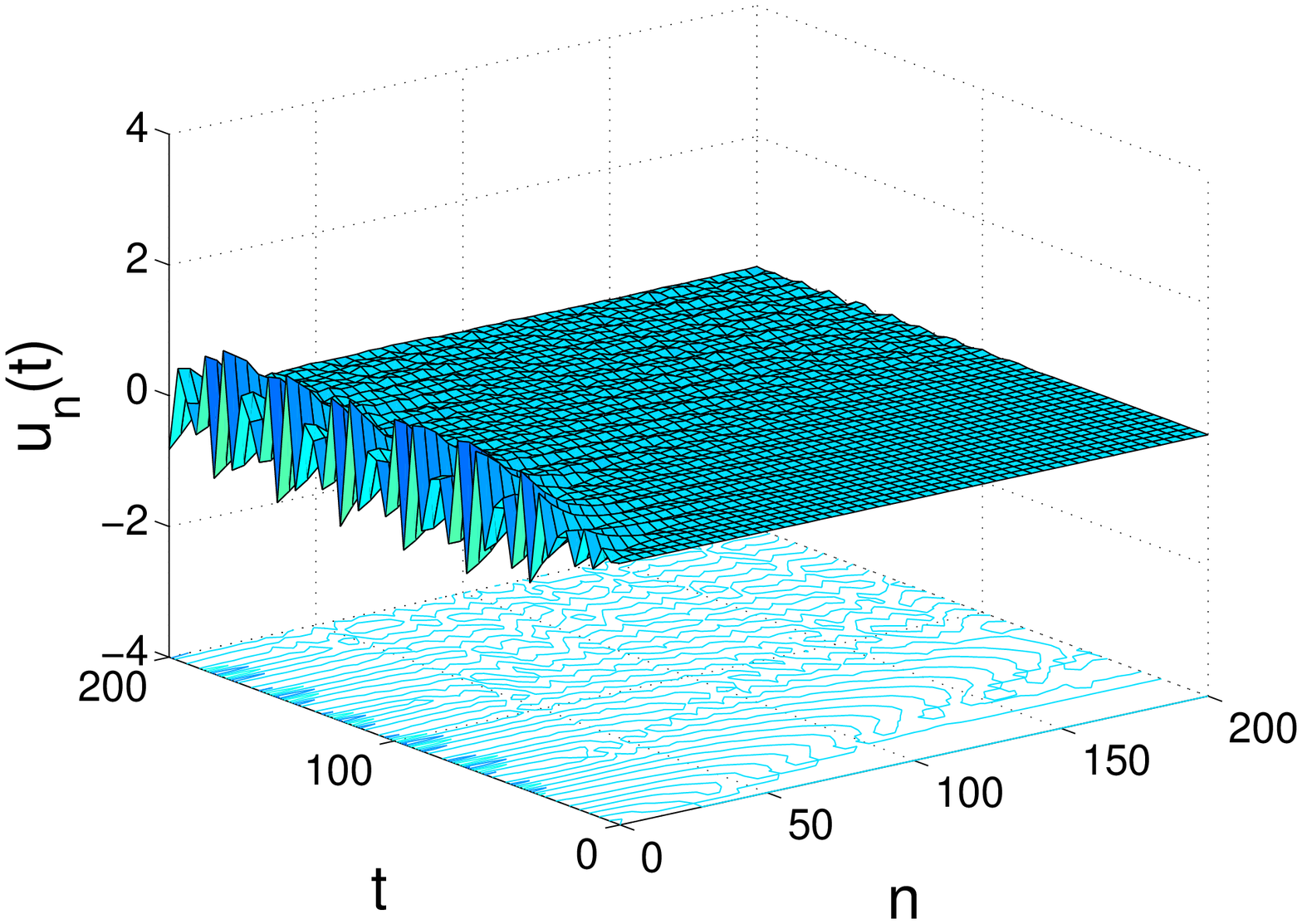} &
		\includegraphics[width=0.48\textwidth]{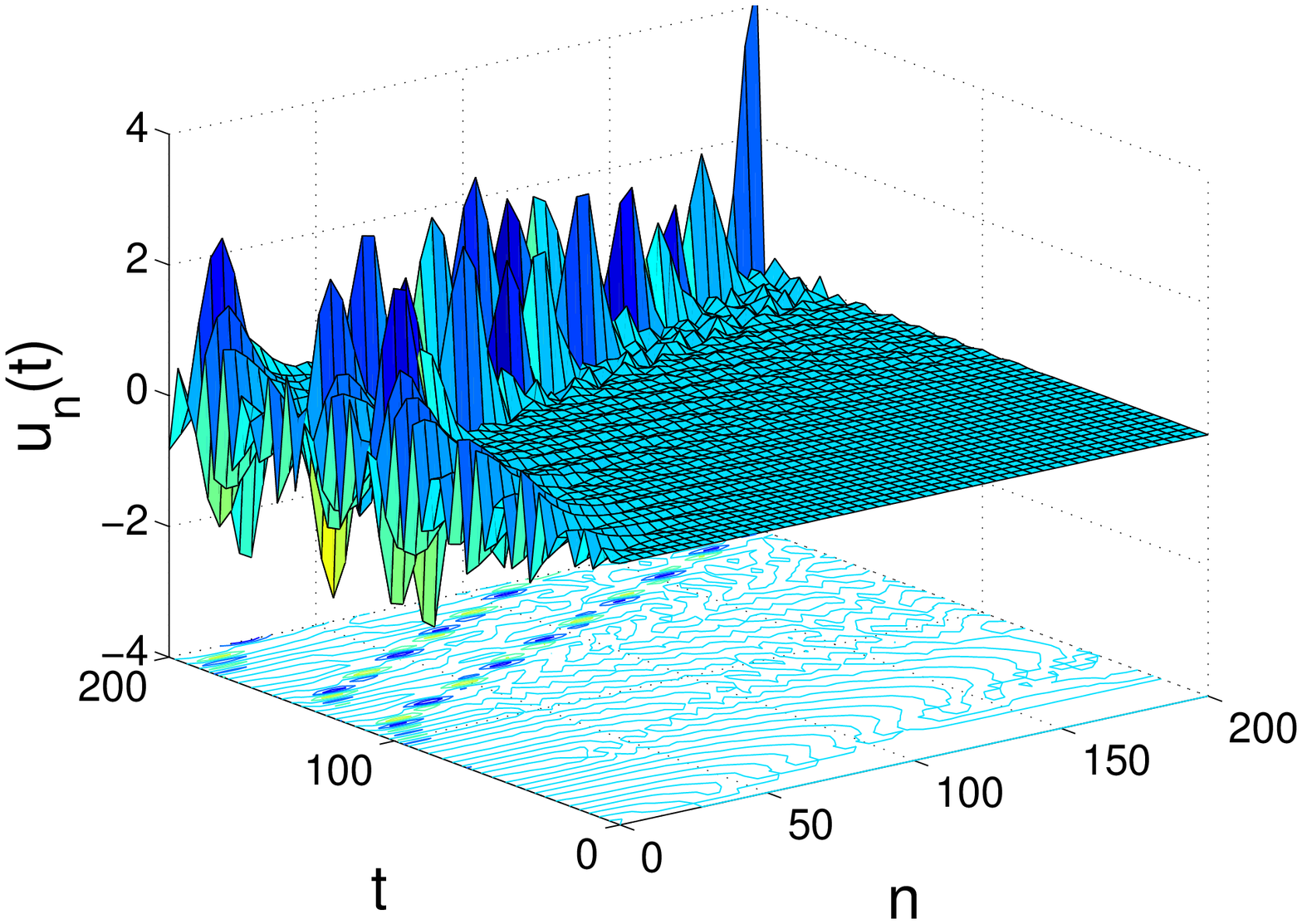}
	\end{tabular}}
	\caption{Graphs of approximate solution versus node site $n$ and time $t$, of a system (\ref {Eq:Model}) of length $N = 200$, for $c = 4$, $\gamma = 0$ and a potential given by (\ref {Eq:Potential}). The system was perturbed by means of (\ref {Eq:Driving}), with $\Omega = 0.9$, and two different amplitude values: $A = 1.03$ (left) and $A = 1.04$ (right). \label {Fig:4}}
\end{figure}

Finally, let 
\begin{equation}
	\phi _k = \phi (t _k).
\end{equation}
With these conventions at hand, the numerical method to approximate solutions of (\ref {Eq:Model}) is summarized as follows:
\begin{equation}
	\begin{array}{c}
			\left( \delta _t ^{(2)} - \mu _t ^{(1)} \delta _x ^{(2)} + \gamma \delta _t ^{(1)} + \delta _u ^{(1)} V \right) (u _n ^k) = 0, \quad \forall n \in \mathbb {Z} _N, \\ 
		\left\{ \begin{array}{ll}
			u _n ^0 = 0, & \forall n \in \mathbb {Z} _N, \\
			u _n ^1 = 0, & \forall n \in \mathbb {Z} _N, \\
			u _0 ^k = \phi _k, & \forall k \in \overline {\mathbb {Z}} _M, \\
			u _N ^k - u _{N - 1} ^k = 0, & \forall k \in \overline {\mathbb {Z}} _M.
		\end{array}\right.
	\end{array}
	\label {Eq:FDS}
\end{equation}
For convenience, the forward-difference stencil of this method has been depicted in Fig. \ref {Fig:1}.

\subsection{Energy scheme\label {SS:EnergyScheme}}

With the same notation as in the previous paragraph, the local energy of the system (\ref {Eq:Model}) at the $n$th node and at the $k$th time step will be approximated by means of the discrete formula
\begin{equation}
		H _n ^k = \frac {1} {2}\left[ \left( \delta _t u _n ^k \right) ^2 + \sum _{j = n - 1} ^n \sum _{l = k} ^{k + 1} \frac {\left( \delta _x u _j ^l \right) ^2} {4} \right] + \mu _t V (u _n ^k),
		\label {Eq:DiscrLocalEnergy}
\end{equation}
where $n \in \mathbb {Z} _N$ and $k \in \mathbb {Z} _M$. Meanwhile, the total energy of the system at time $t _k$ is calculated through the expression
\begin{equation}
	E ^k = \sum _{n \in \mathbb {Z} _N} H _n ^k + \frac {1} {2} \sum _{l = k} ^{k + 1}\frac {\left( \delta _x u _0 ^l \right) ^2} {4}.
	\label {Eq:DiscrTotalEnergy}
\end{equation}

Before closing this stage of our investigation, it is important to point out that the local energy function $H _n$ in (\ref {Eq:LocalEnergy}) is nonnegative for the case of the double sine-Gordon regime; in addition, its discrete counterpart, namely, Eq. (\ref {Eq:DiscrLocalEnergy}), is likewise nonnegative. It follows that the total energy of the system (\ref {Eq:Model}) as given by (\ref {Eq:TotalEnergy}), as well as the discrete total energy (\ref {Eq:DiscrTotalEnergy}) are both nonnegative at any time.

\subsection{Numerical properties\label {SS:NumericalProperties}}

As mentioned previously, the numerical method prescribed by the expressions (\ref {Eq:FDS}), (\ref {Eq:DiscrLocalEnergy}) and (\ref {Eq:DiscrTotalEnergy}) preserves the positivity character of the local and the total energy of the system (\ref {Eq:Model}). Moreover, the finite-difference schemes presented in (\ref {Eq:FDS}) provide consistent solutions of (\ref {Eq:Model}) of order the second order in time (see Appendix \ref {A:Consistency} for a brief discussion of the consistency of the method).

The fact that the local energy estimate (\ref {Eq:DiscrLocalEnergy}) is a consistent approximation of the continuous local energy (\ref {Eq:LocalEnergy}), and that the discrete total energy (\ref {Eq:DiscrTotalEnergy}), in turn, is a consistent estimation of the corresponding continuous expression (\ref {Eq:TotalEnergy}), is evident. The following result shows that this consistency is also preserved on the grounds of the rate of change of energy with respect to time.

\begin{proposition}
	Consider the finite-difference scheme (\ref {Eq:FDS}), with local energy given by (\ref {Eq:DiscrLocalEnergy}), and total energy (\ref{Eq:DiscrTotalEnergy}). Then, the discrete rate of change of energy of the method at time $t _{k - 1}$ is given by
	\begin{equation}
		\delta _t E ^{k - 1} = - c \left( \mu _t ^{(1)} \delta _x u _0 ^k \right) \left( \delta _t ^{(1)} u _0 ^k \right) - \gamma \sum _{n \in \mathbb {Z} _N} \left( \delta _t ^{(1)} u _n ^k \right) ^2 
	\end{equation} 
	\label{Prop:DerTotEner}
\end{proposition}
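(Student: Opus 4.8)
The plan is to compute $\delta_t E^{k-1} = (E^k - E^{k-1})/\Delta t$ directly, splitting the discrete total energy (\ref{Eq:DiscrTotalEnergy}) into its kinetic, potential and coupling (gradient) pieces and treating each in turn; the whole computation mirrors, step for step, the elementary proof of the continuous identity (\ref{Eq:DerivativeEnergy}). The only tools are the algebraic identities $a^2 - b^2 = (a-b)(a+b)$, the telescoping relations $\delta_t w_n^k - \delta_t w_n^{k-1} = \Delta t\,\delta_t^{(2)} w_n^k$ and $\delta_t w_n^k + \delta_t w_n^{k-1} = 2\,\delta_t^{(1)} w_n^k$, the defining relations $w_n^{k+1} - w_n^{k-1} = 2\Delta t\,\delta_t^{(1)} w_n^k$ and $w_n^{k+1} + w_n^{k-1} = 2\mu_t^{(1)} w_n^k$, all valid for an arbitrary grid function $w$, and the fact that $\delta_x$ and $\delta_x^{(2)}$ commute with every temporal operator since they act on a different index.

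I would first dispose of the two easy pieces. For the kinetic term, $\tfrac{1}{2}\big[(\delta_t u_n^k)^2 - (\delta_t u_n^{k-1})^2\big]$ factors and, by the telescoping relations, equals $\Delta t\,(\delta_t^{(2)} u_n^k)(\delta_t^{(1)} u_n^k)$. For the potential term, $\mu_t V(u_n^k) - \mu_t V(u_n^{k-1})$ telescopes to $\tfrac{1}{2}\big[V(u_n^{k+1}) - V(u_n^{k-1})\big]$, which, on multiplying and dividing by $u_n^{k+1} - u_n^{k-1}$, becomes $\Delta t\,\big(\delta_u^{(1)} V(u_n^k)\big)(\delta_t^{(1)} u_n^k)$ straight from the definitions of $\delta_u^{(1)} V$ and $\delta_t^{(1)}$. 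Summed over $n\in\mathbb{Z}_N$, these contribute $\Delta t\sum_{n\in\mathbb{Z}_N}(\delta_t^{(1)} u_n^k)\big(\delta_t^{(2)} u_n^k + \delta_u^{(1)} V(u_n^k)\big)$ to $E^k - E^{k-1}$.

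The substantive step is the coupling part. I would collect every term of $E^k$ carrying a factor $(\delta_x u_j^l)^2$ — the double sums inside $\sum_{n\in\mathbb{Z}_N} H_n^k$ together with the boundary term $\tfrac{1}{8}\sum_{l=k}^{k+1}(\delta_x u_0^l)^2$ of (\ref{Eq:DiscrTotalEnergy}) — and reindex the sum over $n$; one finds that, for each time level $l\in\{k,k+1\}$, the bonds $m=0,\dots,N-2$ occur with total weight $\tfrac{1}{4}$ and the bond $m=N-1$ with weight $\tfrac{1}{8}$, but $(\delta_x u_{N-1}^l)^2 = 0$ by the discrete Neumann condition $u_N^l = u_{N-1}^l$, so the gradient part of $E^k$ collapses to $\tfrac{1}{4}\sum_{l=k}^{k+1}\sum_{m=0}^{N-1}(\delta_x u_m^l)^2$. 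Subtracting the same expression with $k$ replaced by $k-1$, the $l=k$ slice cancels, and the difference of squares together with the defining relations applied to $\delta_x u_m^{k\pm1}$ yields $\Delta t\sum_{m=0}^{N-1}(\delta_t^{(1)}\delta_x u_m^k)(\mu_t^{(1)}\delta_x u_m^k)$. Writing $\delta_t^{(1)}\delta_x u_m^k = c(\delta_t^{(1)} u_{m+1}^k - \delta_t^{(1)} u_m^k)$ and performing an Abel summation (discrete summation by parts) shifts the difference onto the other factor via $\mu_t^{(1)}\delta_x u_m^k - \mu_t^{(1)}\delta_x u_{m-1}^k = \tfrac{1}{c}\,\mu_t^{(1)}\delta_x^{(2)} u_m^k$, leaving a bulk sum $-\sum_{n\in\mathbb{Z}_N}(\delta_t^{(1)} u_n^k)(\mu_t^{(1)}\delta_x^{(2)} u_n^k)$ plus boundary terms at the two ends of the chain; the right-end term vanishes because $\mu_t^{(1)}\delta_x u_{N-1}^k = 0$, and the left-end term equals exactly $-c\,(\mu_t^{(1)}\delta_x u_0^k)(\delta_t^{(1)} u_0^k)$. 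I expect the main obstacle to be precisely this bookkeeping — fixing the index ranges of the double sum, verifying that the boundary bond drops out at both stages, and keeping track of the constants $c$ versus $c^2$ in passing from $\delta_x$ to $\delta_x^{(2)}$ — rather than any conceptual difficulty.

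Finally, adding the three contributions and dividing by $\Delta t$, the sums over $\mathbb{Z}_N$ combine into $\sum_{n\in\mathbb{Z}_N}(\delta_t^{(1)} u_n^k)\big(\delta_t^{(2)} u_n^k - \mu_t^{(1)}\delta_x^{(2)} u_n^k + \delta_u^{(1)} V(u_n^k)\big)$, and by the finite-difference scheme (\ref{Eq:FDS}) the parenthesized quantity equals $-\gamma\,\delta_t^{(1)} u_n^k$; this turns that term into $-\gamma\sum_{n\in\mathbb{Z}_N}(\delta_t^{(1)} u_n^k)^2$, which together with the surviving left-end boundary contribution $-c\,(\mu_t^{(1)}\delta_x u_0^k)(\delta_t^{(1)} u_0^k)$ is exactly the asserted formula for $\delta_t E^{k-1}$.
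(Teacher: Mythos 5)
Your proposal is correct and follows the same overall strategy as the paper: split $E^k-E^{k-1}$ into kinetic, potential and gradient pieces, reduce the first two via the factorizations $\tfrac12\bigl[(\delta_t u_n^k)^2-(\delta_t u_n^{k-1})^2\bigr]=\Delta t\,(\delta_t^{(2)}u_n^k)(\delta_t^{(1)}u_n^k)$ and $\mu_t V(u_n^k)-\mu_t V(u_n^{k-1})=\Delta t\,(\delta_u^{(1)}V(u_n^k))(\delta_t^{(1)}u_n^k)$ (these two identities are verbatim the paper's), and finally substitute the scheme to convert $\delta_t^{(2)}-\mu_t^{(1)}\delta_x^{(2)}+\delta_u^{(1)}V$ into $-\gamma\,\delta_t^{(1)}$. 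Where you genuinely diverge is the gradient term. The paper introduces the node-local quantity $h_n^{k}$ and the boundary term $\iota^{k}$, proves a pointwise identity expressing $\delta_t h_n^{k-1}$ as the bulk term $-\tfrac12(\delta_t^{(1)}u_n^k)(\mu_t^{(1)}\delta_x^{(2)}u_n^k)$ plus a discrete divergence, telescopes the divergence over $n\in\mathbb{Z}_N$, and then combines the surviving end contribution (which involves the auxiliary average $\mu_x\delta_t^{(1)}u_0^k$) with $\delta_t\iota^{k-1}$ to assemble the factor $-c\,(\delta_t^{(1)}u_0^k)(\mu_t^{(1)}\delta_x u_0^k)$. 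You instead aggregate all $(\delta_x u_j^l)^2$ contributions of $E^k$ by bond, check that each bond $m=0,\dots,N-2$ carries weight $\tfrac14$ once the boundary term of (\ref{Eq:DiscrTotalEnergy}) is folded in while the bond $m=N-1$ is annihilated by the Neumann condition, and then perform a single global Abel summation by parts. I verified your weight count and the summation by parts: the right-end boundary term carries the factor $\mu_t^{(1)}\delta_x u_{N-1}^k=0$ and the left-end term is exactly $-c\,(\mu_t^{(1)}\delta_x u_0^k)(\delta_t^{(1)}u_0^k)$, matching what the paper obtains after recombining its two boundary pieces. The two routes are algebraically equivalent (the paper's telescoping is a summation by parts in disguise), but yours dispenses with the auxiliary objects $\mu_x$, $h_n^k$, $\iota^k$ and makes the role of the Neumann condition and the constant-tracking between $\delta_x$ and $\delta_x^{(2)}$ more transparent; the paper's version localizes the computation at each node, which is closer in spirit to how the local energy $H_n^k$ itself is defined.
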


\begin{proof}
See Appendix \ref {A:Proof}.
\end{proof}

\subsection{Computational remarks\label {SS:ComputRemarks}}

Clearly, the finite-difference scheme (\ref {Eq:FDS}) is nonlinear and implicit when $V$ is not a constant function, as it is the case of the double sine-Gordon chain. Thus, in order to approximate the solution of the system (\ref {Eq:Model}) at time $t _{k + 1}$ when the approximations at times $t _k$ and $t _{k - 1}$ are at hand, we employ Newton's method for nonlinear systems of equations.

\begin{figure}[tbc]
	\centerline{%
	\begin{tabular}{cc}
		\includegraphics[width=0.48\textwidth]{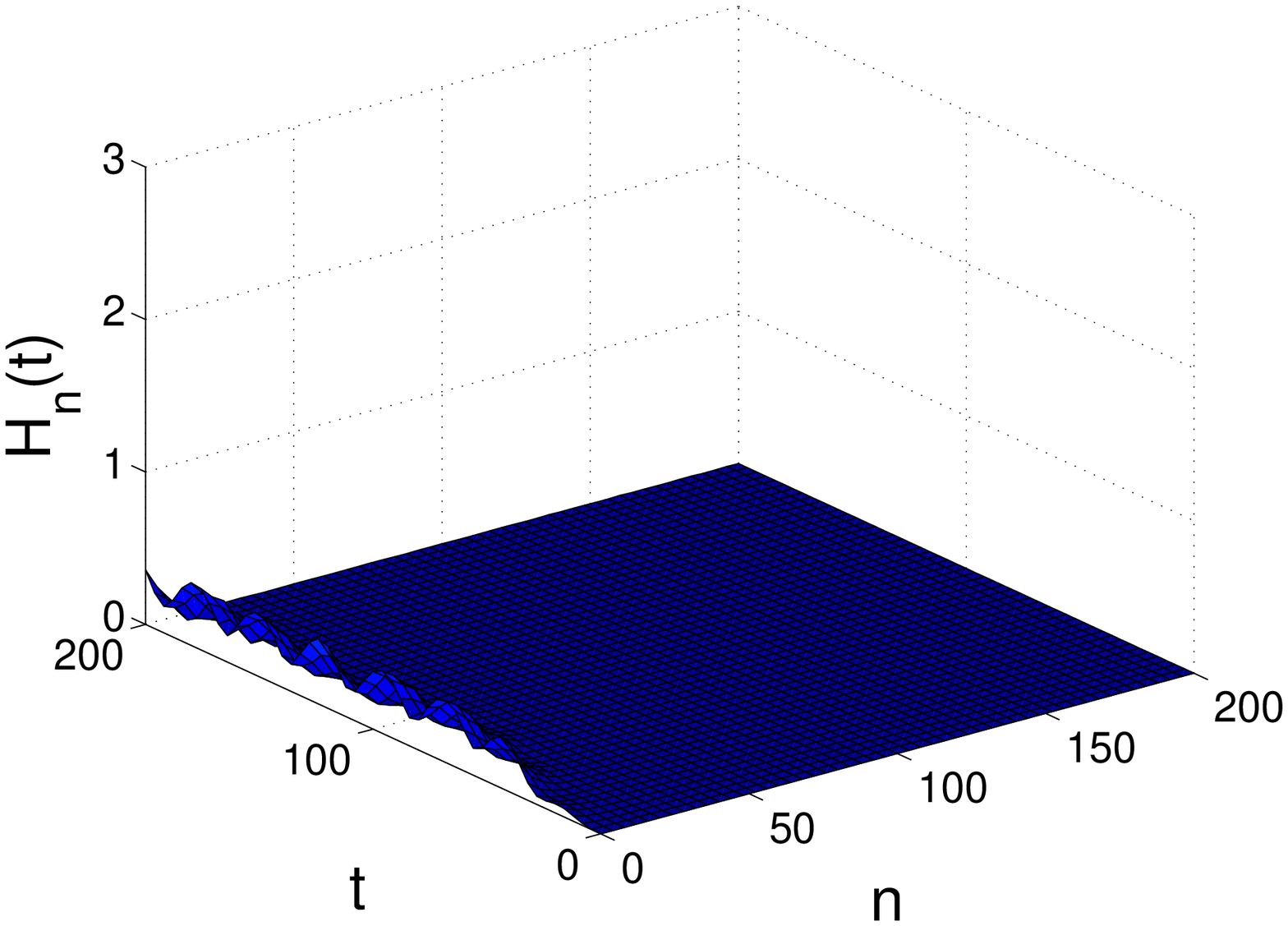} &
		\includegraphics[width=0.48\textwidth]{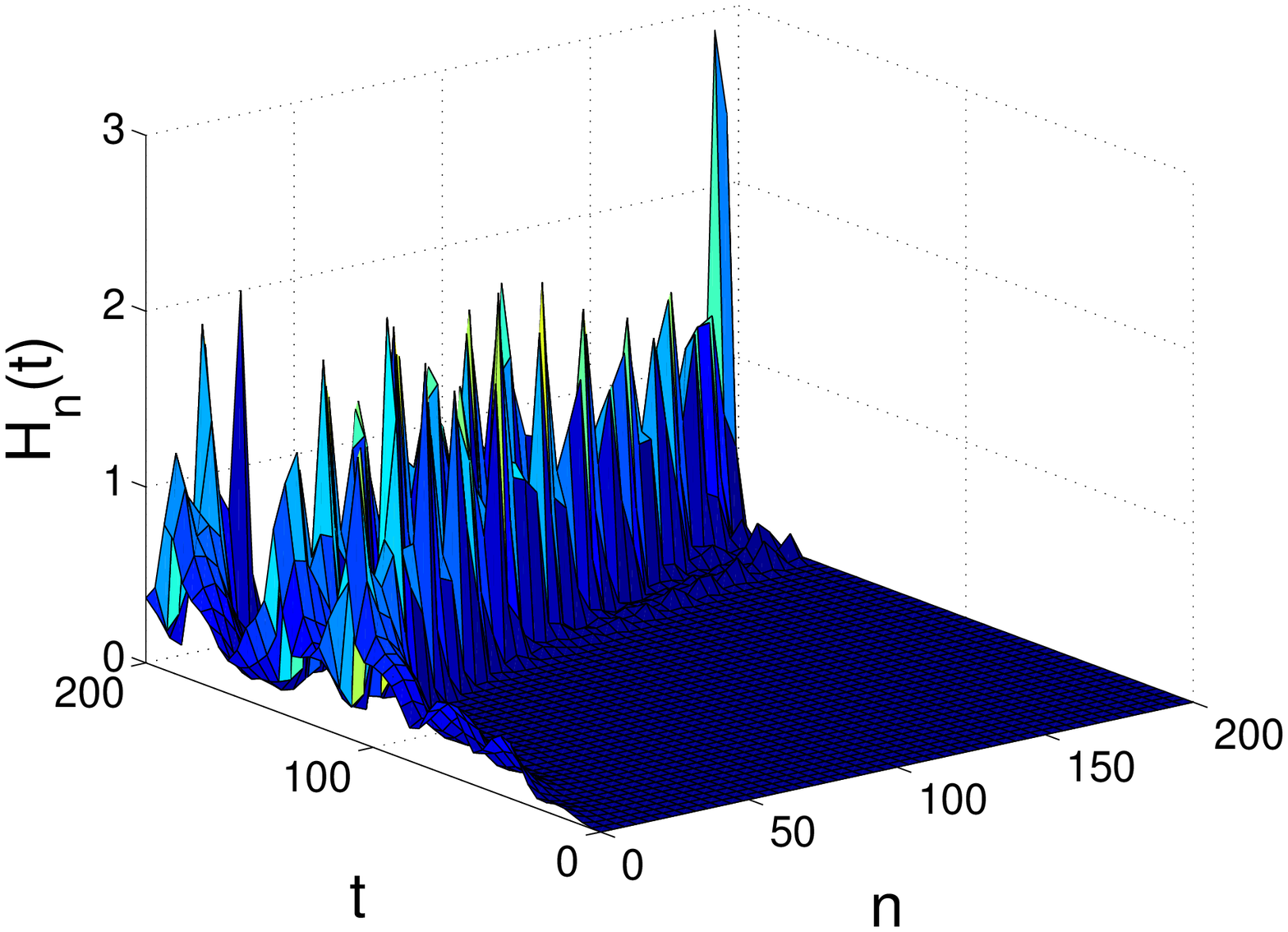}
	\end{tabular}}
	\caption{Graphs of approximate local energy versus node site $n$ and time $t$, of a system (\ref {Eq:Model}) of length $N = 200$, for $c = 4$, $\gamma = 0$ and a potential given by (\ref {Eq:Potential}). The system was perturbed by means of (\ref {Eq:Driving}), with $\Omega = 0.9$, and two different amplitude values: $A = 1.03$ (top graph) and $A = 1.04$ (right column). \label {Fig:5}}
\end{figure}

\begin{figure}
	\centerline{\includegraphics[width=0.8\textwidth]{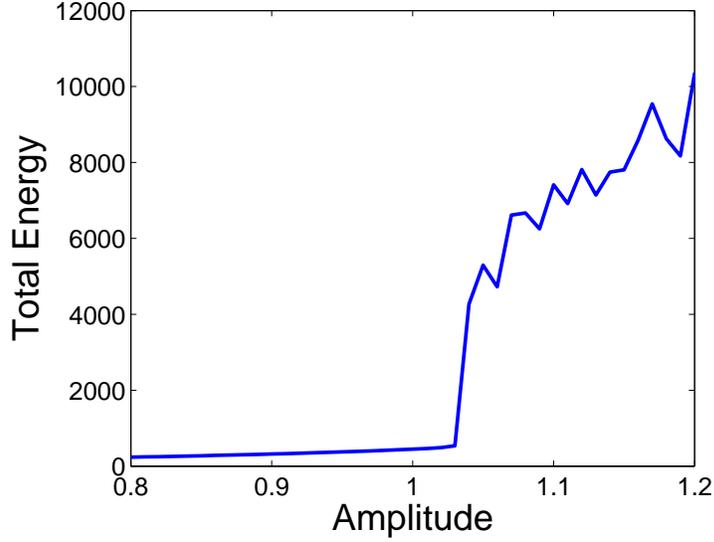}}
	\caption{Graph of approximate total energy over the time period $[0 , 200]$ of the undamped system (\ref {Eq:Model}) versus driving amplitude, subject to harmonic driving of the form (\ref {Eq:Driving}) and a potential (\ref {Eq:Potential}). The parameters $c = 4$, $N = 200$ and $\Omega = 0.9$ were employed in the simulations. \label {Fig:6}}
\end{figure}

Once, again, let us adopt the conventions of Section \ref {SS:FDS}. For every $k \in \mathbb {Z} _M$, let $\mathbf {u} ^k = (u _0 ^k , u _1 ^k , \ldots , u _N ^k)$,
and let $f _n$ be the left-hand side of the $n$th difference equation of (\ref {Eq:FDS}), that is, let
\begin{equation}
	f _n (\mathbf {u} ^k) = \left[ \delta _t ^{(2)} - \mu _t ^{(1)} \delta _x ^{(2)} + \gamma \delta _t ^{(1)} + \delta _u ^{(1)} V \right] (u _n ^k).
\end{equation}
for every $n \in \mathbf {Z} _N$. Additionally, let
\begin{eqnarray}
	f _0 (\mathbf {u} ^k) & = & u _0 ^k - \phi _k, \\
	f _N (\mathbf {u} ^k) & = & u _N ^k - u _{N - 1} ^k.
\end{eqnarray}
Moreover, let $\mathbf {f} = (f _0 , f _1 , \ldots , f _N)$. Using a recursive process, assume that the vectors $\mathbf {u} ^k$ and $\mathbf {u} ^{k - 1}$ have been previously computed. Then 
\begin{equation}
	\mathbf {u} ^{k + 1} = \mathbf {u} ^k - \mathbf {y}, 
	\label{Eq:Iterative}
\end{equation}
where $\mathbf {y}$ is the $(N + 1)$-dimensional vector which satisfies the matrix equation
\begin{equation}
	J (\mathbf {u} ^k) \mathbf {y} = - \mathbf {f} (\mathbf {u} ^k).
	\label {Eq:MatrixSystem}
\end{equation}

Evidently, the $(N + 1) \times (N + 1)$ matrix $J$ is the Jacobian matrix of $\mathbf {f}$, which is given by
\begin{equation}
	J (\mathbf {u} ^k) = \left( \begin{array}{cccccccc}
		1 & 0 & 0 & 0 & \cdots & 0 & 0 & 0 \\
		a & d _1 & a & 0 & \cdots & 0 & 0 & 0 \\
		0 & a & d _2 & a & \cdots & 0 & 0 & 0 \\
		\vdots & \vdots & \vdots & \vdots & \ddots & \vdots & \vdots & \vdots \\
		0 & 0 & 0 & 0 & \cdots & a & d _{N - 1} & a \\
		0 & 0 & 0 & 0 & \cdots & 0 & - 1 & 1
		\end{array} \right),
\end{equation}
where
\begin{eqnarray}
	a & = & - \frac {c ^2} {2}, \\
	d _n & = & \frac {1} {\left( \Delta t \right) ^2} + c ^2 + \frac {\gamma} {2 \Delta t} \\
	 & & + \frac {\left( u _n ^{k + 1} - u _n ^{k - 1} \right) V ^\prime (u _n ^{k + 1}) + V (u _n ^{k - 1}) - V (u _n ^{k + 1})} {\left(u _n ^{k + 1} - u _n ^{k - 1}\right) ^2}, \nonumber
\end{eqnarray}
for every $n \in \mathbb {Z} _N$. The tridiagonal system (\ref {Eq:MatrixSystem}) is solved then employing Crout's reduction technique with pivoting \cite{Burden}.

Of course, for our simulations, Newton's method requires of a stopping criterion in order to approximate the vector $\mathbf {u} ^{k + 1}$ in Eq. (\ref {Eq:Iterative}). Particularly, in this work, this criterion is given by the condition $\Vert \mathbf {y} \Vert _2 < \epsilon$, where the tolerance parameter $\epsilon$ is equal to $1 \times 10 ^{- 4}$, and $\Vert \cdot \Vert _2$ is the classical Euclidean norm in $\mathbb {R} ^{N + 1}$.

\section{Simulations\label {S:Simul}}

Throughout this section, we consider a system governed by (\ref {Eq:Model}), where the driving function assumes the sinusoidal form (\ref {Eq:Driving}). In order to avoid the creation of shock waves produced by the sudden movement of the driving boundary at the initial time, we linearly increase the driving amplitude from zero to its actual value $A$ during a finite period of time $T _0$. Particularly, in the simulations performed in this study, we fix $T _0 = 50$.

\subsection{Sine-Gordon chain\label {SS:Sine-Gordon}}

\begin{figure}[tbc]
	\centerline{%
	\begin{tabular}{cc}
	\includegraphics[width=0.48\textwidth]{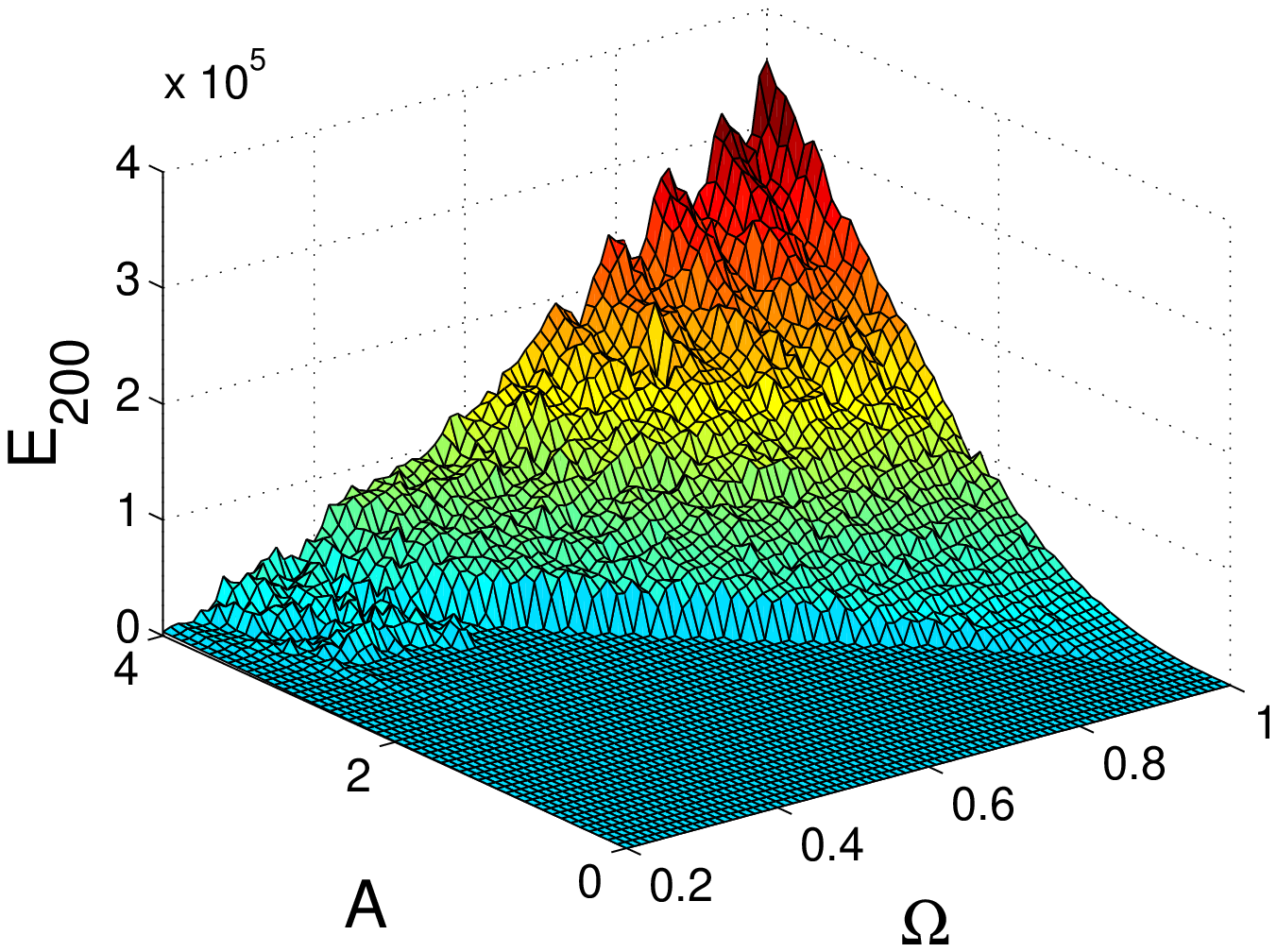} &
	\includegraphics[width=0.48\textwidth]{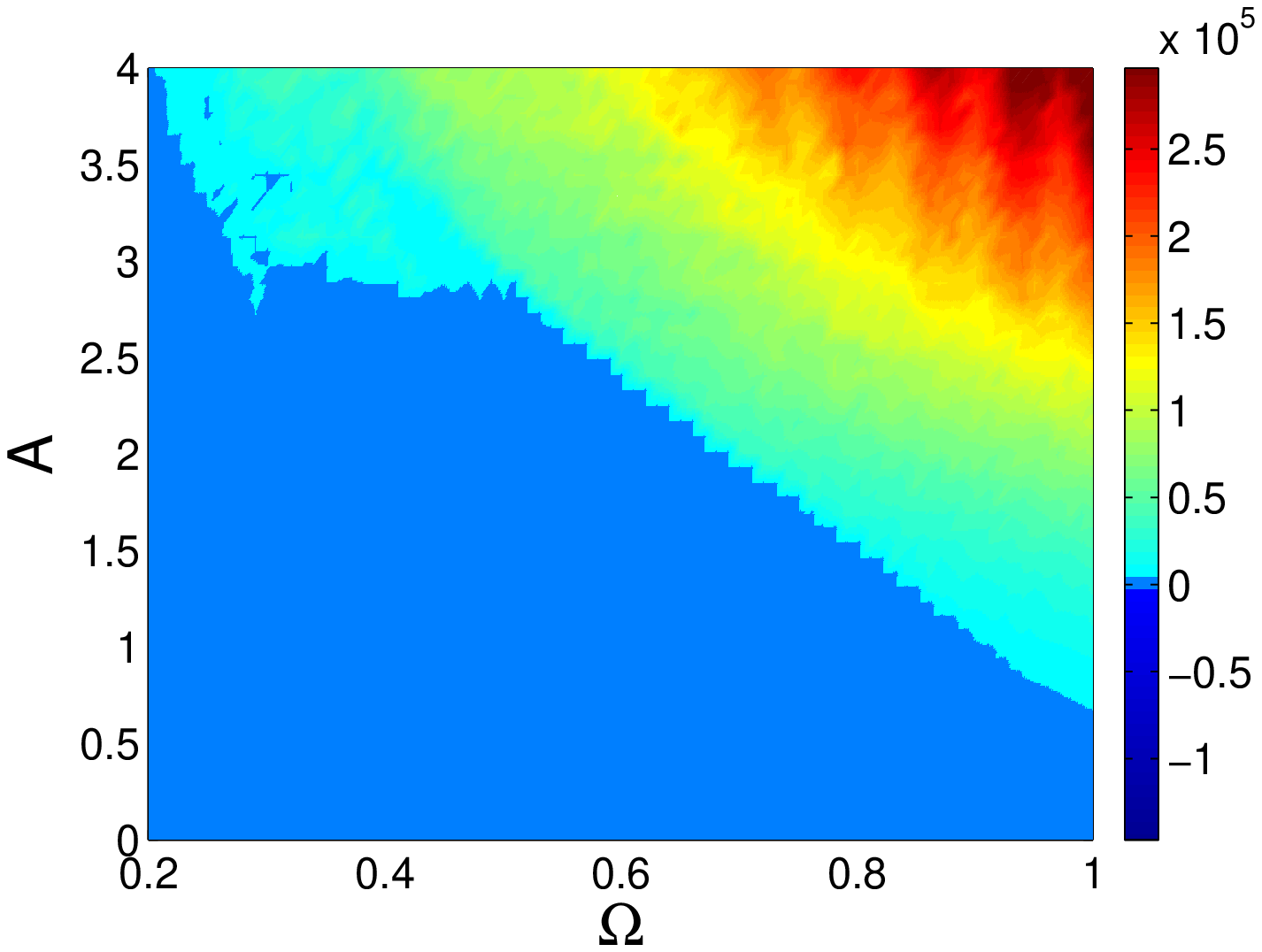}
	\end{tabular}}
	\caption{Graph of approximate total energy over the time period $[0 , 200]$ of the undamped system (\ref {Eq:Model}) versus driving amplitude and driving frequency (left), subject to harmonic driving of the form (\ref {Eq:Driving}) and a potential (\ref {Eq:Potential}). The parameters $c = 4$ and $N = 200$ were employed in the simulations. The left graph is the checkboard plot of the top one. \label {Fig:7}}
\end{figure}

As a means to verify the validity of our method, we consider, first of all, a discrete chain of harmonic oscillators coupled through identical springs, in which case, the governing equations are given by (\ref {Eq:Model}), with $V (u) = 1 - \cos u$. Moreover, assume that the system under study is undamped, let $c = 4$, $N = 200$ and fix a driving frequency of $0.9$. According to \cite {Geniet-Leon}, the supratransmission threshold of the system occurs around the critical value $A _s = 1.78$.

\begin{figure}
	\centerline{\includegraphics[width=0.8\textwidth]{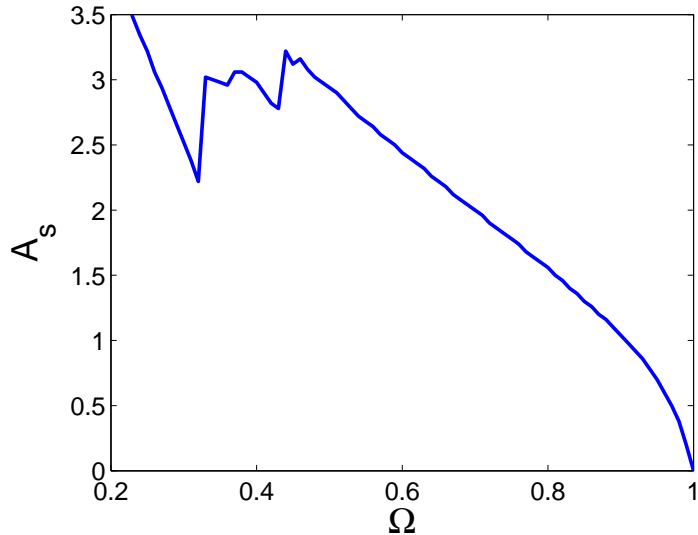}}
	\caption{Graph of approximate driving amplitude $A _s$ above which supratransmission occurs in the undamped system (\ref {Eq:Model}) versus driving frequency $\Omega$, subject to harmonic driving of the form (\ref {Eq:Driving}) and a potential (\ref {Eq:Potential}). The parameters $c = 4$ and $N = 200$ were employed in the simulations. \label {Fig:8}}
\end{figure}

From a computational perspective, we let $\Delta t = 0.05$, and compute approximations to the actual solution of the initial-boundary-value probme (\ref {Eq:Model}) and the corresponding local energy, over a time interval of length $T = 200$. Under these circumstances, Fig. \ref {Fig:2} presents the temporal evolution of the solution and the local energy of the $60$th node of the system for two values of the driving amplitude: $A = 1.77$ (top graph) and $A = 1.78$ (below graph). The results show a drastic change in the qualitative behavior of solutions around the proposed critical amplitude $A _s$. These results are clearly in agreement with \cite {Geniet-Leon}, and they are considered as evidence in favor of both the validity of our method and the existence of supratransmission in the sine-Gordon chain.

\subsection{Double sine-Gordon chain\label {SS:DSine-Gordon}}

As mentioned before, the study of the phenomenon of nonlinear supratransmission of energy in the double sine-Gordon chain is a topic of interest that has been left aside. In this section, however, we proceed to compute bifurcation diagrams similar those constructed to predict the process of supratransmission in discrete sine-Gordon and Klein-Gordon systems (see \cite{Geniet-Leon, Geniet-Leon2}). So, as in the previous stage of our investigation, we consider an undamped system governed by (\ref {Eq:Model}), with parameters $c = 4$, $N = 200$, $T = 200$, $\Omega = 0.9$, and potential given by (\ref {Eq:Potential}). Computationally, let $\Delta t = 0.05$.

With these considerations, Fig. \ref {Fig:3} presents the time-dependent graphs of the solution and the local energy of the $60$th node of the system, for two different values of the driving amplitude, namely, $A = 1.03$ (top graph) and $A = 1.04$ (bottom graph). As in the case of the discrete sine-Gordon system, we observe a drastic qualitative difference in the behavior of the solution and the local energy of the $60$th node, around the critical value $A _s = 1.04$. Indeed, this observation is in agreement with the available literature \cite {Geniet-Leon2}.

\begin{figure}[tbc]
	\centerline{\includegraphics[width=0.8\textwidth]{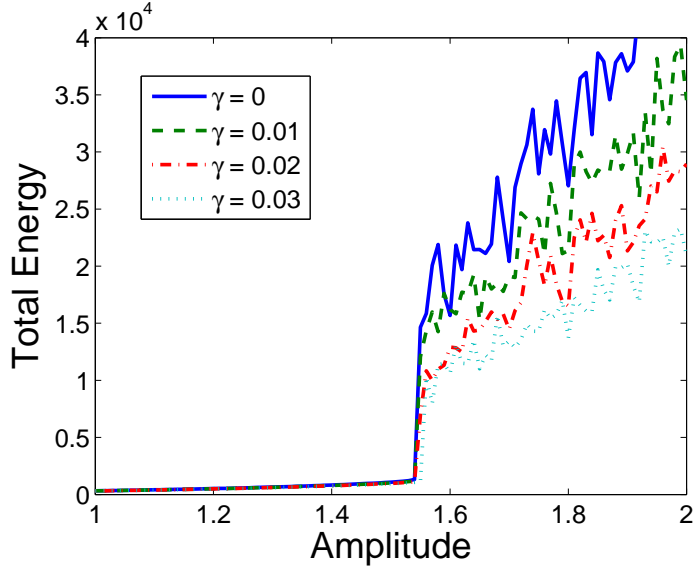}}
	\caption{Graph of approximate total energy over the time period $[0 , 200]$ of the system (\ref {Eq:Model}) versus driving amplitude, subject to harmonic driving of the form (\ref {Eq:Driving}) and a potential (\ref {Eq:Potential}). The parameters $c = 4$, $N = 200$ and $\Omega = 0.9$ were employed in the simulations, with four different values of the damping coefficient: $\gamma = 0$ (solid), $\gamma = 0.01$ (dashed), $\gamma = 0.02$ (dash-dotted), $\gamma = 0.03$ (dotted). \label {Fig:9}}
\end{figure}

Fig. \ref {Fig:4} presents the solution of the system studied in the previous paragraph, with respect to node site $n$ and time $t$, where $t \in [0 , 200]$. The graphs of the solutions clearly change drastically for the two driving amplitudes considered: for $A = 1.03$, the boundary obviously does not propagate wave signals into the system; on the contrary, the graph corresponding to $A = 1.04$ shows transmission of energy into the medium. This observation is verified in Fig. \ref {Fig:5}, which presents the corresponding graphs of local energy for the two amplitudes considered. Evidently, the qualitative observations done in the domain of the solutions carry over to the domain of the local energy of the system.

For the next step in our discussion, we define the total energy of the system (\ref {Eq:Model}) over the time interval $[0 , T]$ as
\begin{equation}
	E _T = \int _0 ^T E (t) d t,
\end{equation}
where $E (t)$ is the total energy of the system at time $t$, given by expression (\ref {Eq:TotalEnergy}). Clearly, $E _T$ is consistently approximated by means of the formula
\begin{equation}
	E _T ^\prime = \sum _{k = 1} ^{M - 1} E ^k \Delta t,
\end{equation}
where each $E ^k$ is given by (\ref {Eq:DiscrTotalEnergy}).

With this notation, Fig. \ref {Fig:6} presents the total energy over the temporal interval $[0 , 200]$, of a system (\ref {Eq:Model}) with the same parameters as above, when the driving amplitude takes on values in the interval $[0.8 , 1.3]$. The graph evidently shows the drastic change in the behavior of the total energy of the system before and after the amplitude value $1.04$. With this strong evidence of the existence of supratransmission in the double sine-Gordon chain, the critical value $A _s = 1.04$ is immediately identified as the nonlinear supratransmission threshold for $\Omega = 0.9$. Obviously, these results are in perfect agreement with \cite {Geniet-Leon2}.

We have performed similar simulations for several values of $\Omega$ in the interval $[0.2 , 1]$, and values of $A$ in $[0 , 4]$, and we have obtained qualitatively identical results. Indeed, Fig. \ref {Fig:7} summarizes our findings, in the form of a graph of total energy over the period of time $[0 , 200]$, versus driving frequency and driving amplitude. Clearly, for every such driving frequency, there exists a smallest driving amplitude $A _s$ above which the system begins to absorb energy from the boundary. From here, a graph of $A _s$ versus driving frequency is obtained and presented as Fig. \ref {Fig:8}. It is worth noticing that the results are in qualitative agreement with those obtained for sine-Gordon chains \cite {Geniet-Leon, Geniet-Leon2}.

Finally, it must be mentioned that the numerical method employed in this work is also useful in order to establish the effects of damping in a discrete double sine-Gordon chain governed by (\ref {Eq:Model}). Indeed, consider a system consisting of $N = 200$ nodes coupled through (\ref {Eq:Model}), with $c = 4$, harmonically perturbed by the driving function (\ref {Eq:Driving}) with $\Omega = 0.8$, over an interval of time $[0 , 200]$, where the potential function $V$ is given by (\ref {Eq:Potential}). Fig. \ref {Fig:9} presents the effect of the driving amplitude $A$ on the total energy of the system, for values of $A$ in the interval $[1 , 2]$, and three different values of the damping coefficient, namely, $\gamma = 0$, $\gamma = 0.01$, $\gamma = 0.02$ and $\gamma = 0.03$. The results show the expected decrease in the total energy of the system as $\gamma$ is  increased and, moreover, they show that the supratransmission threshold is slightly delayed with the presence of damping.

\section{Conclusions\label {S:Concl}}

In this work, we have employed a numerical method in the study of the occurrence of the process of nonlinear supratransmission in a discrete chain of oscillators coupled with identical springs. The method proposed is consistent of order $\mathcal {O} ((\Delta t) ^2)$, and it is associated to a discrete scheme to approximate the local energy of the chain, as well as a scheme for the total energy of the system. Both energy schemes consistently approximate their continuous counterparts, and the method has the property that the discrete rate of change of energy also approximates the corresponding continuous rate of change.

The method was qualitatively tested against known approximations to the occurrence of the phenomenon of nonlinear supratransmission in discrete sine-Gordon and double sine-Gordon chains. The simulations obtained with our method are indeed in excellent agreement with the results available in the literature. Moreover, the method was employed in the construction of a bifurcation diagram of smallest driving amplitude at which supratransmission starts in the undamped system, versus driving frequency. The graph is actually in qualitative agreement with those found in the literature for discrete sine-Gordon and Klein-Gordon chains, which are systems with the same forbidden band-gap region. Moreover, when damping is present, our simulations show that the process of supratransmission is still present in the system under investigation, and that the appearance of the critical amplitude value is delayed as the damping coefficient increases. 

Of course, many avenues of research still remain open. Thus, from a practical point of view, it is important to provide applications of the results presented in this work. More concretely, following \cite {Khomeriki-Leon}, it is interesting to propose applications of the process of nonlinear supratransmission to the design of amplifiers of weak signals, or to the fabrication of detectors of ultra weak pulses, as it has been done for the Klein-Gordon equation \cite {Chevrieux2, Khomeriki-Leon}.

\subsubsection*{Acknowledgments}

The author would like to acknowledge the enlightening comments of the anonymous reviewers, which led to improve the overall quality of the final version of this manuscript. Also, he would like to thank Dr. F. J. \'{A}lvarez Rodr\'{\i}guez, dean of the Faculty of Sciences at the Universidad Aut\'{o}noma de Aguascalientes, and Dr. F. J. Avelar Gonz\'{a}lez, Director of the Office for Graduate Studies and Research of the same university, for uninterestedly providing the computational resources to produce this article. This work presents the final results of project PIM08-1 at this university. 

\appendix

\section{Consistency study\label{A:Consistency}}

A brief consistency analysis of the finite-difference schemes (\ref {Eq:FDS}) reveals that the numerical method proposed in this work is consistent of order $\mathcal {O} ((\Delta t) ^2)$. In fact, observe that
\begin{eqnarray}
	\delta _t ^{(1)} u _n ^k & \approx & \frac {d u _n} {d t} (t _k) + \frac {(\Delta t) ^2} {12} \frac {d ^3 u _n} {d t ^3} (t _k), \\
	\delta _t ^{(2)} u _n ^k & \approx & \frac {d ^2 u _n} {d t ^2} (t _k) + \frac {(\Delta t) ^2} {12} \frac {d ^4 u _n} {d t ^4} (t _k),
\end{eqnarray}
for every $n \in \mathbb {Z} _N$ and every $k \in \mathbb {Z} _M$. Moreover,
\begin{equation}
	\mu _t ^{(1)} \delta _x ^{(2)} u _n ^k \approx \delta _x ^{(2)} u _n ^k + \frac {(\Delta t) ^2} {2} \delta _x ^{(2)} \frac {d ^2 u _n} {d t ^2}.
\end{equation}

\section{Energy consistency \label{A:Proof}}

For the sake of simplification, we introduce the following notation, for every $n \in \mathbb {Z} _N$ and $k \in \mathbb {Z} _M$:
\begin{eqnarray}
	\mu _x u _n ^k & = & \displaystyle {\frac {1} {2} \left( u _{n + 1} ^k + u _n ^k \right)}, \\
	\iota ^k & = & \displaystyle {\frac {1} {4} \sum _{l = k} ^{k + 1}\frac {\left( \delta _x u _0 ^l \right) ^2} {2}},\\
	h _n ^k & = & \displaystyle {\frac {1} {2} \sum _{j = n - 1} ^n \sum _{l = k} ^{k + 1} \frac {\left( \delta _x u _j ^l \right) ^2} {4}}.
\end{eqnarray}
Clearly, the term $\iota ^k$ is identified with the {\em independent term} (the term which is not prescribed by the summation over all $n \in \mathbb {Z} _N$) to the right-hand side of Eq. (\ref {Eq:DiscrTotalEnergy}).

\begin{proof}[Proof of Proposition \ref {Prop:DerTotEner}]
It is easy to check that the following identities are valid for every $n \in \mathbb {Z} _N$ and $k \in \mathbb {Z} _M$:
\begin{equation}
	\frac {1} {2} \left(\delta _t u _n ^k\right) ^2 - \frac {1} {2} \left(\delta _t u _n ^{k - 1}\right) ^2 = \left(\delta _t ^{(2)} u _n ^k\right) \left(\delta _t ^{(1)} u _n ^k \right) \Delta t,
\end{equation}
\begin{equation}
	\mu _t V (u _n ^k) - \mu _t V (u _n ^{k - 1}) = \left( \delta _u ^{(1)} V (u _n ^k) \right) \left( \delta _t ^{(1)} u _n ^k \right) \Delta t.
\end{equation}
It is a tedious algebraic task (though relatively easy) to verify that the following equalities hold, for every $n \in \mathbb {Z} _N$ and $k \in \mathbb {Z} _M$:
\begin{equation}
	\begin{array}{rcl}
		\delta _t h _n ^{k - 1} & = & - \frac {1} {2} \left( \delta _t ^{(1)} u _n ^k \right) \left( \mu _t ^{(1)} \delta _x ^{(2)} u _n ^k \right) \\
		 & & + \frac {c} {2} \left( \delta _t ^{(1)} u _{n + 1} ^k \right) \left( \mu _t ^{(1)} \delta _x u _n ^k \right) \\
	 	 & & - \frac {c} {2} \left( \delta _t ^{(1)} u _{n - 1} ^k \right) \left( \mu _t ^{(1)} \delta _x u _{n - 1} ^k \right).
	 \end{array}
\end{equation}
As a consequence, 
\begin{equation}
	\begin{array}{rcl}
		\displaystyle {\sum _{n \in \mathbb {Z} _N} \delta _t H _n ^{k - 1}} & = & \displaystyle {\sum _{n \in \mathbb {Z} _N} \left\{ \left[ \delta _t ^{(2)} - \mu _t ^{(1)} \delta _x ^{(2)} + \delta _u ^{(1)} V \right] (u _n ^k) \cdot \right.} \\
		 & & \displaystyle { \left. \left( \delta _t ^{(1)} u _n ^k \right) \right\} - c \left( \mu _x \delta _t ^{(1)} u _0 ^k \right) \left( \mu _t ^{(1)} \delta _x u _0 ^k \right)} \\
		 & = & \displaystyle {- \gamma \sum _{n \in \mathbb {Z} _N} \left( \delta _t ^{(1)} u _n ^k \right) ^2} \\
		 & & \displaystyle { - c \left( \mu _x \delta _t ^{(1)} u _0 ^k \right) \left( \mu _t ^{(1)} \delta _x u _0 ^k \right)}.
	\end{array}
\end{equation}
Moreover, 
\begin{equation}
	\delta _t \iota ^{k - 1} = \frac {1} {2} \left( \delta _t ^{(1)} \delta _x u _0 ^k \right) \left( \mu _t ^{(1)} \delta _x u _0 ^k \right).
\end{equation}
The conclusion of Proposition \ref {Prop:DerTotEner} is now reached by computing $\delta _t E ^{k - 1}$ and simplifying.
\end{proof}

\end{document}